\documentclass[sigconf, nonacm]{acmart}
\AtBeginDocument{%
  }

\renewcommand\footnotetextcopyrightpermission[1]{}
\begin{document}

\newcommand{\pending}[1]{%
  \textbf{\textcolor{red}{[PENDING: #1]}}}

\title{Same Book, Different Fills:
Partial Identification of FIFO Execution from Aggregate Order Books}



\author{Riya Danait}
\authornote{Corresponding author.}
\orcid{0009-0002-9306-470X}
\email{riya.danait@maths.ox.ac.uk}
\affiliation{%
  \institution{NVIDIA}
  \city{Santa Clara}
  \state{CA}
  \country{USA}
}

\author{Yuliana Zamora}
\orcid{0000-0002-9133-3462}
\email{yzamora@nvidia.com}
\affiliation{%
  \institution{NVIDIA}
  \city{Santa Clara}
  \state{CA}
  \country{USA}
}

\author{Ioana Boier}
\orcid{0009-0007-9321-8218}
\email{iboier@nvidia.com}
\affiliation{%
  \institution{NVIDIA}
  \city{Santa Clara}
  \state{CA}
  \country{USA}
}

\renewcommand{\shortauthors}{Danait et al.}

\begin{abstract}
  Price-level limit-order-book (L2) data reveal aggregate liquidity but not the ordered queue of resting orders required by price--time-priority matching. Consequently, passive-execution backtests based on L2 can depend on an unobserved cancellation-allocation rule even when the observed prices, quantities, and trades are held fixed. 
  
  We formulate recovery of market-by-order histories from aggregate snapshots as a conditional partial-identification problem: multiple histories can reproduce the same aggregate path. Holding that path, reconciled market removals, latent order partitions, and additions fixed, our path-preserving compiler varies only cancellation allocation among front, quantity-weighted-random, and back rules. Within this compiler class, we establish front--back fill ordering for a virtual tagged order during a single touch-price spell.
  
  Our empirical study uses seven months of synchronized 2025 Tokyo Stock Exchange data for two instruments with different trading activity: RIC 1301.T and RIC 7911.T. Ten-level L2 snapshots provide aggregate book states, while L1 trades allow market removals and resting side to be inferred through reconciliation. Each instrument contributes 1,080 matched five-minute episodes across the same 18 held-out trading days. The aggressive benchmark is unchanged across FIFO realizations, but passive execution is sensitive to the cancellation rule. For 1301.T, front rather than back cancellation raises completion before terminal crossing by 8.01 percentage points and reduces implementation shortfall by 1.010 bps. For 7911.T, front cancellation raises completion by 7.39 percentage points and reduces shortfall by 0.384 bps.
  
  These findings show that observationally equivalent aggregate-book paths can imply economically different passive-execution outcomes. Execution policies evaluated from aggregate data should be accompanied by FIFO sensitivity analysis rather than reported as single-point estimates based on an unobservable queue assumption.
\end{abstract}

\begin{CCSXML}
<ccs2012>
<concept>
<concept_id>10010405.10010455.10010460</concept_id>
<concept_desc>Applied computing~Economics</concept_desc>
<concept_significance>500</concept_significance>
</concept>
<concept>
<concept_id>10010147.10010341.10010342.10010345</concept_id>
<concept_desc>Computing methodologies~Uncertainty quantification</concept_desc>
<concept_significance>500</concept_significance>
</concept>
<concept>
<concept_id>10010147.10010341.10010342.10010344</concept_id>
<concept_desc>Computing methodologies~Model verification and validation</concept_desc>
<concept_significance>300</concept_significance>
</concept>
</ccs2012>
\end{CCSXML}

\ccsdesc[500]{Applied computing~Economics}
\ccsdesc[500]{Computing methodologies~Uncertainty quantification}
\ccsdesc[300]{Computing methodologies~Model verification and validation}

\keywords{limit order book, price--time priority, queue position, execution simulation, market microstructure, partial identification}


\maketitle

\section{Introduction}

Execution simulators commonly separate two tasks: a statistical market model evolves the displayed limit order book, and a matching engine applies the exchange's price-time-priority rules \cite{huang2015queue,frey2023jaxlob,jpxTransactionMethods}. When the market model or historical input is restricted to level two (L2), this division creates a gap. L2 reports total quantity at each price but not the constituent orders, their arrival order, or which order a cancellation removes. Those missing details leave displayed depth unchanged, but they determine when a passive order reaches the front of its queue and can be filled.

Existing L2-based execution studies address this gap by assuming a particular within-queue cancellation convention; for example, that cancellations occur near the front, near the back, or at sampled queue positions \cite{dixon2018execution,vyetrenko2019risk,coletta2022world}. In effect, they fix one order-level history
consistent with the observed aggregate path and then evaluate policies under that choice. We instead ask whether an execution conclusion is preserved across several FIFO-consistent order-level histories that reproduce the same observed L2 path. This is a conditional partial-identification problem: the observations restrict, but do not select, the compatible order-level history \cite{manski2003partial,tamer2010partial}. We call the resulting range a \emph{FIFO-sensitivity envelope}. It is conditional on a specified reconstruction procedure and is not an estimate of the unobserved historical queue. For example, suppose a passive child joins behind 300 shares at the best bid and later orders join behind it. A subsequent 100-share cancellation may remove older quantity ahead of the child or newer quantity behind it. Both cases produce the same L2 decrease, but only the former improves the child's queue position and fill probability.

We study seven months of synchronized 2025 Tokyo Stock Exchange data for two instruments. Ten-rank L2 snapshots provide aggregate book states, while a synchronized level-one (L1) trade feed provides trade time, price, and quantity. We reconcile the feeds, hold the resulting aggregate path fixed, and compile matched histories using front, quantity-weighted-random, and back cancellation allocation. A virtual order then measures passive fills without changing the replayed market. The same interface evaluates three price-taking execution policies. The reconstructed background streams are checked independently through JAX-LOB \cite{frey2023jaxlob}, a GPU-capable matching engine implemented in JAX, so that the reconstruction is verified in an external replay system rather than only by the compiler's internal accounting.

The paper makes three contributions. First, a precise definition of FIFO-consistent order-level histories compatible with an observed aggregate path; second, a reconstruction procedure that varies only within-queue cancellation placement and is checked independently through JAX-LOB replay; and third, a held-out evaluation of passive-fill ranges, changes in the preferred policy, and the cost of choosing under one queue assumption and evaluating under another. To our knowledge, prior work has not held the aggregate path fixed, constructed multiple compatible FIFO histories, checked their replay in an independent matching engine, and then measured how execution costs and policy choices vary across them.

\section{Related Work}

\subsection{Queue state, fills, and execution}

Price--time priority gives earlier orders economic value \cite{donnelly2018timepriority,moallemi2017queue}, while empirical cancellations are not uniformly distributed through a queue \cite{gu2013cancellation}. Dixon represents unknown cancellation quantity ahead of a reference order through favorable and unfavorable cases \cite{dixon2018execution}. Vyetrenko and Xu compare front, back, and uniform allocation in L2 replay \cite{vyetrenko2019risk}; World Agent models cancellation depth and within-queue position as part of a learned simulator \cite{coletta2022world}. Maglaras, Moallemi, and Wang estimate conditional time-to-fill distributions directly from observed market states \cite{maglaras2022fill}. That predictive problem takes the training fill outcome as observed, whereas our problem concerns the range of fills that remains after L2 aggregation has removed the order identities needed to construct a unique FIFO outcome. We retain the complete aggregate path and all common inputs, and vary only the allocation of residual within-queue cancellations. Yueshen studies queue uncertainty induced by latency and venue fragmentation \cite{yueshen2025queuing}; the uncertainty here arises ex post because aggregation has discarded historical order identities.

\subsection{Queue-reactive and learned market models}

Queue-reactive models make limit, cancellation, and market-event intensities functions of displayed queues \cite{huang2015queue}. Later work adds event sizes, signals, cross-level dependence, and neural state representations \cite{bodor2024sizes,bodor2025mdqr,sfendourakis2026multidimensional}, including environments for execution-policy learning \cite{espana2025rl}. These models generate aggregate quantities, but aggregate accuracy alone does not determine the persistent order identities needed for FIFO fills. The present study addresses this downstream order-level realization problem.

\subsection{Simulation and validation}

ABIDES supports asynchronous agent-based simulation \cite{byrd2020abides}; JAX-LOB provides parallel price--time-priority matching \cite{frey2023jaxlob}. Realism tests compare market statistics and stylized facts \cite{vyetrenko2020getreal,nagy2025lobbench}. Our test is complementary: given one observed aggregate path, it asks whether alternative mechanically valid order-level refinements produce the same execution decision.

\section{Path-Conditional FIFO Sensitivity Framework}

\subsection{Observed aggregate path}

Let \(A_b=\{(t_k,Q_k,T_k,Z_k)\}_{k=0}^{K_b}\) be one retained contiguous replay block. Here \(Q_{s,p,k}\) is displayed quantity on side \(s\in\{b,a\}\) at price \(p\) after update \(t_k\); \(T_k\) contains the regular L1 trades assigned to that transition; and \(Z_k\) records session, outer-rank, and unresolved-transition flags. A side--price transition is retained only when the price is observed and its change can be reconciled without an unresolved compound update. Let \(\mathcal I(A_b)\) denote these retained cells.

For a retained cell, displayed quantity satisfies
\begin{equation}
  \Delta Q_{s,p,k}
  =
  L_{s,p,k}-D_{s,p,k}-M_{s,p,k},
  \qquad
  \Delta Q_{s,p,k}=Q_{s,p,k}-Q_{s,p,k-1},
  \label{eq:accounting}
\end{equation}
where \(M\) is an L1-attributed market removal. L1 supplies time, price, and quantity, but not the resting side. We assign a trade to a side--price cell only when its price is uniquely visible on one side of the pre-event book and the L2 reduction supports its quantity. After this assignment, define
\begin{equation}
  r_{s,p,k}=\Delta Q_{s,p,k}+M_{s,p,k},
  \qquad
  L_{s,p,k}=[r_{s,p,k}]_+,\quad
  D_{s,p,k}=[-r_{s,p,k}]_+ .
  \label{eq:residual}
\end{equation}
\([x]_+=\max(x,0)\). This is a minimal one-sided decomposition: after the attributed trade, an accepted cell may contain an addition or a cancellation, but not an unresolved simultaneous amount of both. Cells that do not support this interpretation end the replay block. Within each accepted transition, reconstructed actions are applied in the common order \(M\), then \(D\), then \(L\): reconciled market removals are processed first, followed by residual cancellations and then residual additions. This is a reconstruction convention shared by all FIFO rules.

An order-level realization \(R\) replaces each aggregate quantity with an ordered queue of order IDs, remaining quantities, and priority times. Its aggregate projection is
\begin{equation}
  \Gamma(R)_{s,p,k}
  =
  \sum_{(i,q,\tau)\in\mathcal O_{s,p,k}(R)}q .
  \label{eq:projection}
\end{equation}
A realization is compatible with block \(A_b\) when it reproduces the aggregate path and evolves through valid order-level updates. An addition of total size \(L_{s,p,k}\) appends live quantity at the corresponding side--price queue; an L1-attributed market removal of size \(M_{s,p,k}\) consumes resting quantity in FIFO order; and a cancellation of size \(D_{s,p,k}\) removes sufficient live background quantity. Each realization is restricted to the displayed finite-depth support; no liquidity is imputed outside the observed ranks. For the boundary-fixed view, valid evolution additionally permits the deterministic finite-rank synchronization operations encoded by \(Z_k\). These operations are identical across FIFO rules; strict blocks end before such an operation is required. Define
\begin{equation}
\mathcal R(A_b)=
\left\{
R\;\middle|\;
\begin{array}{l}
\Gamma(R)_{s,p,k}=Q_{s,p,k}
\quad\forall (s,p,k)\in\mathcal I(A_b),\\
R\text{ preserves }(t_k,T_k,Z_k)_{k=0}^{K_b},\\
R\text{ obeys the valid }(L,D,M)\text{ evolution and the common} \\
\text{boundary-synchronization rule described above}
\end{array}
\right\}.
\label{eq:compatible}
\end{equation}
Thus the observations determine a set of compatible queue histories rather than one recovered history.

\subsection{Controlled compiler class}

Here, a \emph{compiler} is simply a reconstruction procedure that converts an aggregate path into an order-level message stream. We evaluate a specified subset of \(\mathcal R(A_b)\). Let
\[
\mathcal G
=
\{\mathrm{front},\mathrm{random},\mathrm{back}\},
\]
where \(\mathrm{random}\) denotes the quantity-weighted-random rule defined below. Let \(\omega=(\omega_P,\omega_C)\) contain the partition seed and the separately declared random cancellation seed. For each \(g\in\mathcal G\) and \(\omega\in\Omega\), write \(R^{g,\omega}=\mathsf C_{g,\omega}(A_b)\). The compiler class is
\begin{equation}
\begin{aligned}
\mathcal R_{\mathfrak C}(A_b)
&=
\bigl\{
R^{g,\omega}:
g\in\mathcal G,\,
\omega\in\Omega
\bigr\} \\
&\subseteq
\mathcal R(A_b).
\end{aligned}
\label{eq:compilerclass}
\end{equation}

For fixed common inputs \(\omega\), define \(\mathcal R_{\mathfrak C}(A_b\mid\omega)
=\{R^{g,\omega}:g\in\mathcal G\}\). Then \(\mathcal R_{\mathfrak C}(A_b)
=\bigcup_{\omega\in\Omega}
\mathcal R_{\mathfrak C}(A_b\mid\omega)\). Within each fixed-\(\omega\) slice, \(\omega_P\) fixes identical initial and subsequent addition partitions; timestamps, prices, L1-attributed removals, and block boundaries are also common. The rules change only how a within-book residual cancellation is allocated: \emph{front} removes the oldest eligible quantity, \emph{back} removes the newest, and \emph{random} successively samples an eligible live background order with probability proportional to its remaining quantity, removes up to the outstanding cancellation amount, and repeats until \(D_{s,p,k}\) has been allocated, using \(\omega_C\).

The primary 300-second policy analysis uses a boundary-fixed view. It ends at unresolved transitions and session boundaries but admits finite-rank entry and exit through a deterministic synchronization rule shared by all three FIFO realizations. Boundary removals use the same back-of-queue convention under every rule and are not treated as ordinary within-queue cancellations governed by the front, random, or back rules. Tagged probes use strict blocks that additionally end before every outer-rank entry or exit. A stricter policy analysis is reported on the subset of strict blocks that can support a 300-second episode.

\subsection{Shadow policies and outcome sets}

Each realization is replayed as a canonical background book. Only background messages are sent to JAX-LOB. A passive policy order is represented separately by a virtual queue marker that records its price, remaining quantity, and background quantity ahead. Same-price additions after submission join behind the marker. A cancellation reduces quantity ahead only when the corresponding realization removes a background order ahead of the marker.

At a background execution of size \(m_k\) at the marker price, let \(H_{g,k^-}\) denote quantity ahead immediately before the execution and let \(x_{g,k^-}\) denote the marker's remaining quantity. Its virtual fill and post-execution state are
\begin{align}
f_{g,k}
&=
\min\!\left\{
[m_k-H_{g,k^-}]_+,\,
x_{g,k^-}
\right\},
\label{eq:virtual-fill}\\
H_{g,k}
&=
[H_{g,k^-}-m_k]_+,
\qquad
x_{g,k}
=
x_{g,k^-}-f_{g,k}.
\label{eq:virtual-state}
\end{align}
For initial quantity \(X\), cumulative fill is \(F_g(t)=X-x_g(t)\). The cap by \(x_{g,k^-}\) prevents overfilling and makes all subsequent fills zero once the marker is complete. The marker never diverts an execution or changes the background book.

An aggressive policy action is evaluated on a copy of the contemporaneous book. The copy is consumed to obtain executable quantity and price and then discarded. Consequently, neither passive nor aggressive actions affect the next historical transition. These are price-taking counterfactuals.

For policy \(\pi\), let \(Y(\pi;R)\) contain its prespecified fill and cost outcomes. Conditional on the event-reconstruction and finite-depth assumptions above, the aggregate path defines a compatible outcome set, whereas the experiment evaluates the subset generated by the specified compiler:
\begin{equation}
\begin{aligned}
\mathcal Y(\pi;A_b)
&=
\{Y(\pi;R):R\in\mathcal R(A_b)\},\\
\mathcal Y_{\mathfrak C}(\pi;A_b)
&=
\{Y(\pi;R):R\in\mathcal R_{\mathfrak C}(A_b)\}\\
&\subseteq
\mathcal Y(\pi;A_b).
\end{aligned}
\label{eq:outcomeset}
\end{equation}
The inclusion need not be an equality. Thus \(\mathcal Y_{\mathfrak C}\) is the sensitivity set produced by our stated reconstruction rules, not the complete set of outcomes allowed by the data. At fixed \(\omega\), front and back are endpoint conventions for cancellation location. This does not make them extrema of \(Y\) or \(\ell\) over the full compiler class. The only endpoint result we claim is the tagged-fill ordering below under its stated single-touch conditions; front and back are not sharp bounds over all histories compatible with \(A_b\).

\subsection{Conditional ordering for a passive order}

Consider a period during which one price remains the best quote on its side. A virtual order joins that price at time \(\tau\); later additions join behind it, executions consume FIFO, and background cancellations cannot cancel it.

\begin{proposition}[conditional tagged-fill ordering]\label{prop:cond-tagged-fill-ordering}
    Let \(r\) denote the quantity-weighted-random rule, and let \(c_{g,k}\) denote the amount of the \(k\)th cancellation allocated to background quantity ahead of the marker under \(g\in\{\mathrm{front},r,\mathrm{back}\}\). Suppose the realizations share the same initial queue, subsequent additions, and executions, and that, for every cancellation \(k\),
    \[
    c_{\mathrm{back},k}
    \leq c_{r,k}
    \leq c_{\mathrm{front},k}.
    \]
    Then
    \begin{equation}
      F_{\mathrm{back}}(t)
      \leq
      F_r(t)
      \leq
      F_{\mathrm{front}}(t)
      \label{eq:fillorder}
    \end{equation}
    for every \(t\) before the best price changes.
\end{proposition}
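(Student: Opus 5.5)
The plan is a monotone-coupling argument on the scalar state \((H_g,x_g)\) of the virtual marker. On the spell before the best price changes, the marker's state under each rule \(g\) evolves only through three kinds of events. Same-price additions after \(\tau\) join behind the marker and leave \((H_g,x_g)\) unchanged. Executions of size \(m_k\) update the state by \eqref{eq:virtual-fill}--\eqref{eq:virtual-state}. Cancellations reduce the quantity ahead by \(c_{g,k}\), so \(H_g\mapsto H_g-c_{g,k}\).

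Because the realizations share the initial queue, the marker starts with the same \(H_g(\tau)=H_0\) and \(x_g(\tau)=X\) for all \(g\). The additions and executions \(m_k\) are common, so the only rule-dependent inputs are the cancellation amounts \(c_{g,k}\). I will therefore prove the following claim by induction over the event sequence: if \(c_{g,k}\le c_{g',k}\) for every \(k\), then \(H_g(t)\ge H_{g'}(t)\) and \(x_g(t)\ge x_{g'}(t)\) at every time \(t\) in the spell. Applying the claim to the pairs \((\mathrm{back},r)\) and \((r,\mathrm{front})\) gives \eqref{eq:fillorder}, since \(F_g=X-x_g\).

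The induction step treats each event type separately. Additions preserve both inequalities trivially. For a cancellation, \(H_g-c_{g,k}\ge H_{g'}-c_{g',k}\) holds because \(H_g\ge H_{g'}\) and \(c_{g,k}\le c_{g',k}\); the \(x\)-components are untouched.

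The main step is an execution of size \(m\). Here I will rewrite the update in the cleaner form
\[
(H+x)^{\mathrm{new}}=[H+x-m]_+, \qquad H^{\mathrm{new}}=[H-m]_+ .
\]
This follows by checking the cases \(m\le H\), \(H<m\le H+x\) and \(m>H+x\) against \eqref{eq:virtual-fill}--\eqref{eq:virtual-state}. In this form, \(x^{\mathrm{new}}=[H+x-m]_+-[H-m]_+\).

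Monotonicity of \(H^{\mathrm{new}}\) in \(H\) is immediate. The obstacle is monotonicity of \(x^{\mathrm{new}}\): it is a difference of two increasing functions, so it is not increasing in \(H\) in general. To handle this I will strengthen the induction hypothesis to carry both \(H_g\ge H_{g'}\) and \(x_g\ge x_{g'}\), and then argue case by case on the position of \(m\).

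If \(m\le H_{g'}\), neither marker fills, so \(x\) is unchanged and the inequality is preserved.

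If \(m\ge H_g\), both markers sit within reach of the execution. Their new remaining quantities are \(x_g-(m-H_g)\) and \(x_{g'}-(m-H_{g'})\), each truncated at zero. The first is at least the second because \(x_g\ge x_{g'}\) and \(H_g\ge H_{g'}\), and truncation at zero preserves the order.

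If \(H_{g'}<m<H_g\), the \(g\) marker is unfilled, so \(x_g^{\mathrm{new}}=x_g\ge x_{g'}\ge x_{g'}^{\mathrm{new}}\).

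Finally, I will note two supporting facts. Cancellations ahead can never make \(H\) negative, since \(c_{g,k}\) is bounded by the live quantity ahead, so all quantities stay in the domain where the formulas apply. The restriction to a single touch-price spell is exactly what guarantees that no price change re-queues the marker, so the recursion above is the complete dynamics. Taken together, these steps give \(x_{\mathrm{back}}\ge x_r\ge x_{\mathrm{front}}\) pointwise in \(t\), which is \eqref{eq:fillorder}.
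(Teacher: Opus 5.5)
Your proposal is correct and is essentially the paper's own argument: an induction over the common event sequence that carries the orderings of both $H_g$ and $x_g$, with additions leaving the state unchanged, cancellations ordered by the premise on $c_{g,k}$, and executions preserving both orderings under \eqref{eq:virtual-fill}--\eqref{eq:virtual-state}; your three-case check of the execution step spells out what the paper's sketch only asserts. One small remark: the ``obstacle'' you flag does not actually arise, since $x^{\mathrm{new}}=x-\min\{[m-H]_+,x\}=[\,x-[m-H]_+\,]_+$ is non-decreasing in each of $H$ and $x$, so the execution step is immediate, though your case analysis is valid regardless.
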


\textit{Proof sketch.}
At submission, the three markers have the same remaining quantity and the same quantity ahead. Subsequent additions join behind the markers and therefore do not change their queue-ahead quantities. At cancellation \(k\),
\[
H_{g,k}^{+}
=
[H_{g,k}^{-}-c_{g,k}]_+ .
\]
The assumed ordering of ahead-of-marker cancellation quantities therefore preserves
\[
H_{\mathrm{front},k}
\leq
H_{r,k}
\leq
H_{\mathrm{back},k}.
\]
At a common execution of size \(m_k\), the states update according to
\[
H_{g,k}^{+}
=
[H_{g,k}^{-}-m_k]_+,
\qquad
x_{g,k}^{+}
=
\max\!\left\{
x_{g,k}^{-}-[m_k-H_{g,k}^{-}]_+,
0
\right\}.
\]
These updates preserve
\[
x_{\mathrm{front},k}
\leq
x_{r,k}
\leq
x_{\mathrm{back},k}.
\]
Induction over the common event sequence gives this ordering at every subsequent event. Since \(F_g(t)=X-x_g(t)\), the stated fill ordering follows. \(\square\)

The proposition is conditional in two respects: it applies to a single passive marker during an unchanged best-price spell, and the intermediate rule must satisfy the ahead-removal ordering at each cancellation. We do not claim that every random allocation satisfies this premise or that the result extends to arbitrary repricing or multi-order policies. In the empirical analysis, we verify the resulting fill ordering for every matched tagged-probe triplet.



\section{Data and Event Reconstruction}

\subsection{Feeds and scope}

We use proprietary LSEG data for Tokyo Stock Exchange equities from January 6 through July 25, 2025 \cite{lsegtyodata}. The sample contains two instruments, RIC 1301.T and RIC 7911.T, with synchronized ten-level L2 snapshots and L1 trade records. Each normalized L2 row reports occupied price ranks and aggregate size, but not usable order counts. The L1 feed contains quotes, trades, and status records; regular trades provide exchange time, price, and volume. We exclude pre-open, lunch, auction, and other non-continuous phases.

We retain 1301.T as the lower-turnover instrument and use 7911.T as a more actively traded instrument. 7911.T was selected before examining June or July policy outcomes using regular-trade yen turnover on three fixed development dates: January 15, March 14, and May 15. Candidates were required to appear in both feeds and pass basic two-sided-book and ten-level-completeness checks. The resulting comparison tests whether FIFO sensitivity persists across different activity and tick-size regimes. It should not be interpreted as a causal estimate of liquidity, since the instruments also differ in price, tick size, and parent-order scale.

\subsection{Causal synchronization}

Within each feed, capture timestamp and internal sequence determine order. Exchange time is the cross-feed key, while capture time provides a capture-lag diagnostic. A January orientation test assigned same-key trades either to the transition ending at that key or to the following transition. The former explained 99.85\% of regular trade quantity by a same-price reduction, versus 8.66\% for the latter, and was fixed before later months were processed.

We consider a date to be \textit{feed-eligible} only when both feeds contain the RIC, scans are complete, at least 98\% of regular trades fall within L2 coverage and have an exact exchange-time match, the daily 99th-percentile absolute capture lag is below 20 ms, and at least 99\% of continuous rows are two-sided and uncrossed with all requested ten ranks present in at least 95\% of rows. 

\subsection{Trade reconciliation}

Same-key trades are grouped by price. Supported quantity is assigned when the trade price is uniquely visible on one resting side; Equation~\eqref{eq:residual} then classifies the remaining one-sided change. The same procedure is applied independently to both instruments.

The primary FIFO study starts from all 136 and 137 feed-eligible dates for 1301.T and 7911.T, respectively. Valid contiguous price-keyed blocks are constructed separately from each instrument's complete reconciliation ledger. This observed-path view preserves the available aggregate book path and reconciled market removals directly. Table~\ref{tab:data} summarizes the resulting feed-eligible samples and reconciliation diagnostics.

\begin{table}[t]
\caption{TYO data and reconciliation, January--July 2025.}
\label{tab:data}
\centering
\setlength{\tabcolsep}{3pt}
\begin{tabular}{@{}lrr@{}}
\toprule
Metric & 1301.T & 7911.T \\
\midrule
Tick size (JPY) & 5 & 1 \\
Feed-eligible dates & 136 & 137 \\
Continuous L2 rows & 843,135 & 9,062,080 \\
Regular L1 trades & 24,607 & 319,046 \\
Exact-time-match median & 100\% & 100\% \\
Monthly p99 lag (ms) & 3.39--4.57 & 4.43--4.89 \\
Same-price trade support & 99.85--99.99\% & 99.77--99.99\% \\
Unresolved transitions & 4.14--6.11\% & 5.09--6.59\% \\
Boundary transitions & 11.10--16.00\% & 8.39--12.32\% \\
Held-out dates / episodes & 18 / 1,080 & 18 / 1,080 \\
\bottomrule
\end{tabular}
\end{table}

\subsection{Chronological protocol}

For 1301.T, January--May contain 97 feed-eligible dates used to estimate the partition inputs and parent-order scale; June contains 21 dates used to select block, probe, and policy settings; and July contains 18 held-out dates. Estimation and selection are performed separately for each instrument. Instrument selection, compiler rules, exclusions, metrics, and robustness checks are fixed before any July FIFO outcomes are examined. No events are randomly split or pooled across periods; the reported FIFO effects therefore describe the July held-out distribution, not the January--June development and selection periods.

JPX classifies security code 1301 as a TOPIX Small 2 constituent and 7911 as a TOPIX Mid400 constituent during the sample period \cite{jpxTopix2024}. Security 1301 therefore follows the ``Other Issues'' tick schedule; its retained prices lie in the band for which the minimum increment is \(5\) yen. Security 7911 is a TOPIX500 issue and its retained prices lie in the corresponding \(1\)-yen tick band \cite{jpxTickSize}. TSE domestic stocks trade in units of 100 shares \cite{jpxTradingUnit}. The protocol therefore uses tick sizes of \(5\) yen for 1301.T and \(1\) yen for 7911.T, with a 100-share exchange trading unit for both.

\section{Path-Preserving FIFO Compiler}

\subsection{Common queue construction}

At the first snapshot of each block, every price-level quantity is partitioned into latent background orders. Because L2 does not report those orders, the partition mechanism uses positive reconciled aggregate addition increments from January--May as drawing inputs. Such increments may combine several historical submissions and are not claimed to be individual-order sizes. Draws are truncated so that their sum equals the displayed quantity. The same seeded partitions are used under all cancellation rules.

Subsequent positive increments are partitioned by the same mechanism and appended to the queue. L1-attributed market removals consume FIFO quantity; residual cancellations follow the selected front, random, or back rule. Every rule removes the same total quantity and reproduces the next observed aggregate state.

\subsection{Executable invariants}

A block is accepted only if times are monotone, live order IDs are unique, cancellations reference sufficient live quantity, executions consume eligible resting orders in FIFO order, and no quantity is negative. After every retained transition, summing live orders at each price must recover the observed L2 state. The three cancellation streams must also have identical aggregate states, additions, L1-attributed removals, times, and block boundaries.

The aggregate projection is checked after every accepted transition. As a separate check, a specified sample of reconstructed background streams is replayed through JAX-LOB and compared with the compiler book at the same points in time. This independent replay checks message routing and matching in an external JAX-based engine rather than only through the compiler's own state updates. Virtual policy orders remain outside JAX-LOB and are evaluated through their queue-ahead records.

\subsection{Computational setup}

Experiments were conducted on a single node with 8 CPU cores and 4 NVIDIA GB200 GPUs. Final matching-engine verification used the JAX GPU backend in an NVIDIA PyTorch 25.01 NGC container. The primary held-out design contained 6,480 matched episode–side–seed units per ordered FIFO-rule comparison, with each unit evaluated under the three cancellation rules and policies. 

\section{Experimental Design}

\subsection{Tagged queue-entry probes}

The primary diagnostic registers a virtual 100-share marker at the same-side best price every five seconds when both sides are present and the spread is positive. This is a virtual sensitivity probe, not a submitted child order. A probe ends after 15 seconds, when that price is no longer the best price, or when its replay block ends. Buy and sell probes are paired across front, random, and back rules with the same path, submission time, size, partition, and random seed. Outcomes are fill fraction, any fill, time to 50\% and 100\% fill, and initial quantity ahead. Equation~\eqref{eq:fillorder} is checked for every eligible matched triplet.

\subsection{Price-taking execution policies}

Within each eligible block, 300-second episodes are placed consecutively from the earliest valid start time without overlap. A terminal block fragment shorter than 300 seconds is discarded. Episode boundaries are identical across policies, sides, seeds, and FIFO rules. We evaluate three deterministic shadow policies on both sides:
\begin{description}
  \item[Aggressive TWAP] follows cumulative targets at ten equally spaced checkpoints; targets are rounded half up to 100-share lots and zero-quantity children are skipped.
  \item[Passive at touch] joins the best same-side price and retains its queue marker while that price remains best. After a completed background transition changes the best price, the marker is canceled and the policy rejoins the new best price after that transition.
  \item[Hybrid] follows the passive rule but crosses any schedule deficit at the ten checkpoints.
\end{description}

Passive children use virtual queue markers; aggressive children query copied book states. Parent size is 5\% of development-period median positive same-side aggressive volume, rounded to 100-share lots with a one-lot minimum: 100 shares for 1301.T and 200 for 7911.T. Background transitions precede policy actions at tied timestamps.

\subsection{Metrics}

Let \(X\) be parent quantity, \(P_0\) arrival midprice, and \(\{(q_i,P_i)\}\) the natural fills before horizon \(T\). The residual quantity is
\[
q_T=X-\sum_i q_i .
\]
When \(q_T>0\), it is crossed against a copy of the terminal book and \(P_T\) denotes its volume-weighted execution price; when \(q_T=0\), the terminal contribution below is defined to be zero. We define the implementation shortfall 
\begin{equation}
  \mathrm{IS}(\pi;R)
  =
  10^4\sigma
  \frac{\sum_iq_i(P_i-P_0)+q_T(P_T-P_0)}{XP_0}.
  \label{eq:is}
\end{equation}
Here \(\sigma=1\) for buys and \(-1\) for sells, so lower values of \(\mathrm{IS}\) indicate better execution. This completed implementation shortfall is the scalar policy loss \(\ell(\pi;R)\). We separately report pre-terminal completion \(\sum_iq_i/X\), the passive fraction \(X^{-1}\sum_{i\in\mathcal P}q_i\), where \(\mathcal P\) indexes fills obtained through virtual passive markers, the terminal-cross fraction, and natural completion time \cite{perold1988shortfall,almgren2001optimal}. An episode without enough displayed terminal depth to execute \(q_T\) is flagged under the specified eligibility rule rather than assigned an imputed price.

For fixed common inputs \(\omega\), a policy selected under rule \(g\) has regret under rule \(h\)
\begin{equation}
 \operatorname{Regret}_{g\rightarrow h,\omega}
 =
 \ell(\widehat\pi_{g,\omega};R^{h,\omega})
 -
 \min_{\pi}\ell(\pi;R^{h,\omega}),
 \label{eq:regret}
\end{equation}
where \(\widehat\pi_{g,\omega}\) minimizes loss under \(R^{g,\omega}\). Policies whose losses differ by at most \(10^{-6}\) bps are treated as tied; a fixed alphabetical rule makes the choice deterministic. We record a policy-choice change only when the policy chosen under rule \(g\) is more costly than the best policy under rule \(h\) by more than this tolerance. We report front--back outcome differences, how often the policy choice changes, and the additional cost caused by such a change.

\subsection{Analysis units and uncertainty}

Unresolved transitions and session boundaries split both replay views. Tagged probes use strict blocks that also split at every outer-rank entry or exit, whereas the primary policy analysis uses the shared boundary-fixed convention. Within each instrument, probe and policy outcomes are first averaged within each held-out date, and the reported overall estimates weight the 18 dates equally. For each fixed \(\omega\), matched rule outcomes are differenced before date aggregation. The reported estimates average these paired contrasts over the prespecified seeds and dates; they are not suprema or envelope widths over the full compiler class. Each resampling replicate draws 18 complete dates with replacement and computes their equally weighted mean; the reported 95\% interval is the 2.5th--97.5th percentile range over 5,000 replicates. The frequency with which the policy choice changes, and the associated extra cost, are pooled over matched episode--side--seed observations and reported descriptively rather than with date-resampling intervals. Robustness checks compare the empirical-increment partition with deterministic 100-share chunks and compare boundary-fixed episodes with stricter block eligibility.

\section{Results}\label{sec:results}

\subsection{FIFO compilation and queue-entry sensitivity}

For each instrument, the held-out policy evaluation contains 1,080 distinct 300-second episode windows from the same 18 dates in July. For 1301.T, their union covers 97.27\% of continuous-session elapsed time, 95.27\% of L2 transitions, and 96.08\% of reconciled trade volume; the corresponding coverage for 7911.T is 96.80\%, 96.06\%, and 94.40\%. Each episode is evaluated for two sides and three primary seeds, yielding 6,480 matched episode--side--seed units per instrument and directed FIFO-rule comparison.

As a separate matching-engine check, 648 reconstructed streams from 54 background paths for 1301.T and 456 streams from 38 paths for 7911.T were replayed through JAX-LOB. All requested market-message quantities executed: 232,800 shares for 1301.T and 4,119,600 for 7911.T; no add message triggered an immediate trade, and no sentinel-price artifact remained. This shows that the reconstructed streams are not only aggregate-consistent but also mechanically executable in an independent JAX-based matching engine. The tagged-order diagnostic comprises 3,209 blocks for 1301.T and 21,309 for 7911.T. These cover 88.36\% and 87.53\% of elapsed time, 61.19\% and 48.72\% of transitions, but only 22.84\% and 18.12\% of reconciled trade volume, respectively. Mean 15-second fill is approximately 0.01\% for 1301.T; for 7911.T it ranges from 0.22\% under back removal to 0.50\% under front removal. These short-horizon probes test the predicted fill ordering rather than policy-level economic effects, which are measured using the separate matched 300-second policy episodes. All 354,774 and 370,992 matched triplets, respectively, satisfy the conditional ordering in Equation~\eqref{eq:fillorder}.

\begin{table*}[t]
\caption{Held-out policy outcomes by instrument. Lower implementation
shortfall (IS) is better. Bracketed values are 95\% equal-date
resampling intervals over 18 held-out dates. ``Random'' denotes
quantity-weighted-random cancellation allocation.}
\label{tab:fifo-results}
\centering
\scriptsize
\begin{tabular}{@{}llccc@{}}
\toprule
RIC (parent) & Policy and metric & Front & Random & Back \\
\midrule
1301.T (100)
 & Aggressive benchmark IS (bps)
 & \(10.010\,[9.701,10.316]\)
 & \(10.010\,[9.701,10.316]\)
 & \(10.010\,[9.701,10.316]\) \\
 & Hybrid IS (bps)
 & \(9.277\,[8.905,9.665]\)
 & \(9.469\,[9.138,9.813]\)
 & \(9.675\,[9.378,9.970]\) \\
 & Passive-at-touch IS (bps)
 & \(8.008\,[7.480,8.531]\)
 & \(8.418\,[7.941,8.890]\)
 & \(9.018\,[8.656,9.401]\) \\
\addlinespace
7911.T (200)
 & Aggressive benchmark IS (bps)
 & \(2.400\,[2.334,2.462]\)
 & \(2.400\,[2.334,2.462]\)
 & \(2.400\,[2.334,2.462]\) \\
 & Hybrid IS (bps)
 & \(1.039\,[0.911,1.160]\)
 & \(1.219\,[1.102,1.331]\)
 & \(1.453\,[1.332,1.579]\) \\
 & Passive-at-touch IS (bps)
 & \(0.765\,[0.547,0.974]\)
 & \(0.926\,[0.703,1.126]\)
 & \(1.149\,[0.932,1.357]\) \\
\addlinespace
1301.T (100) & Passive preterminal completion
 & 18.15\% & 15.29\% & 10.14\% \\
7911.T (200) & Passive preterminal completion
 & 97.20\% & 95.52\% & 89.81\% \\
\bottomrule
\end{tabular}
\end{table*}

\begin{figure*}[t]
  \centering
  \includegraphics[width=0.82\textwidth]
    {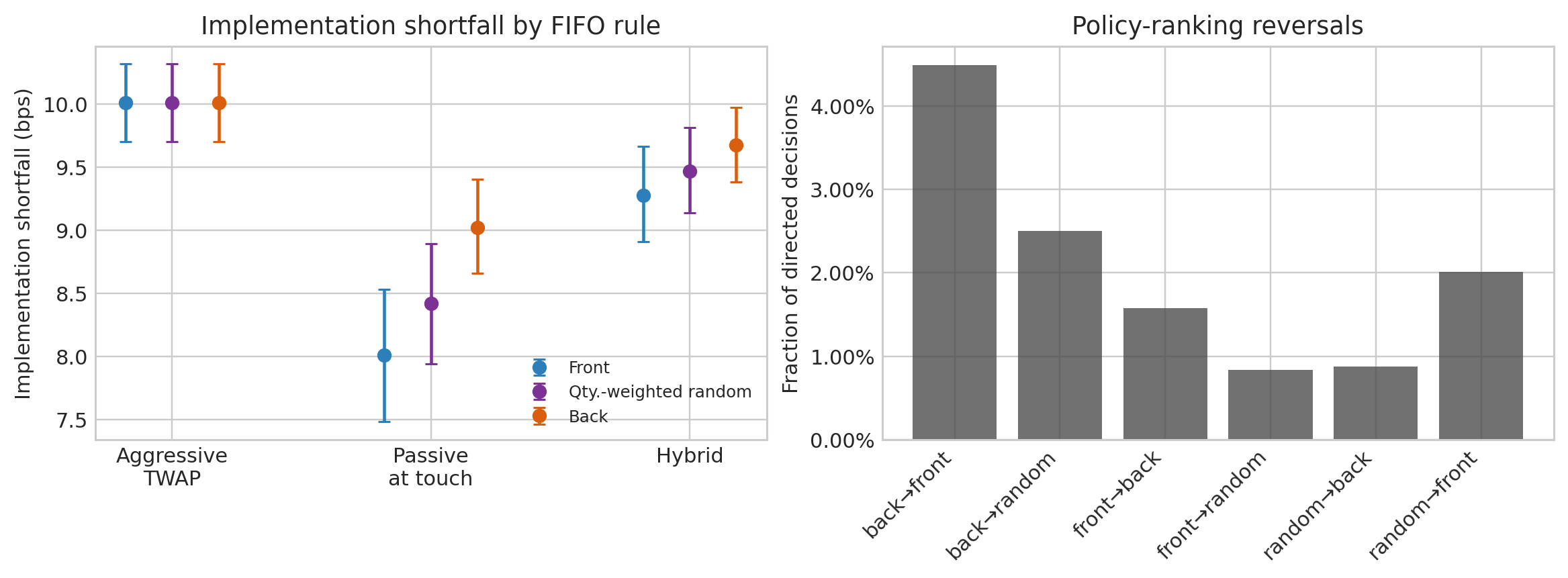}
  \includegraphics[width=0.82\textwidth]
    {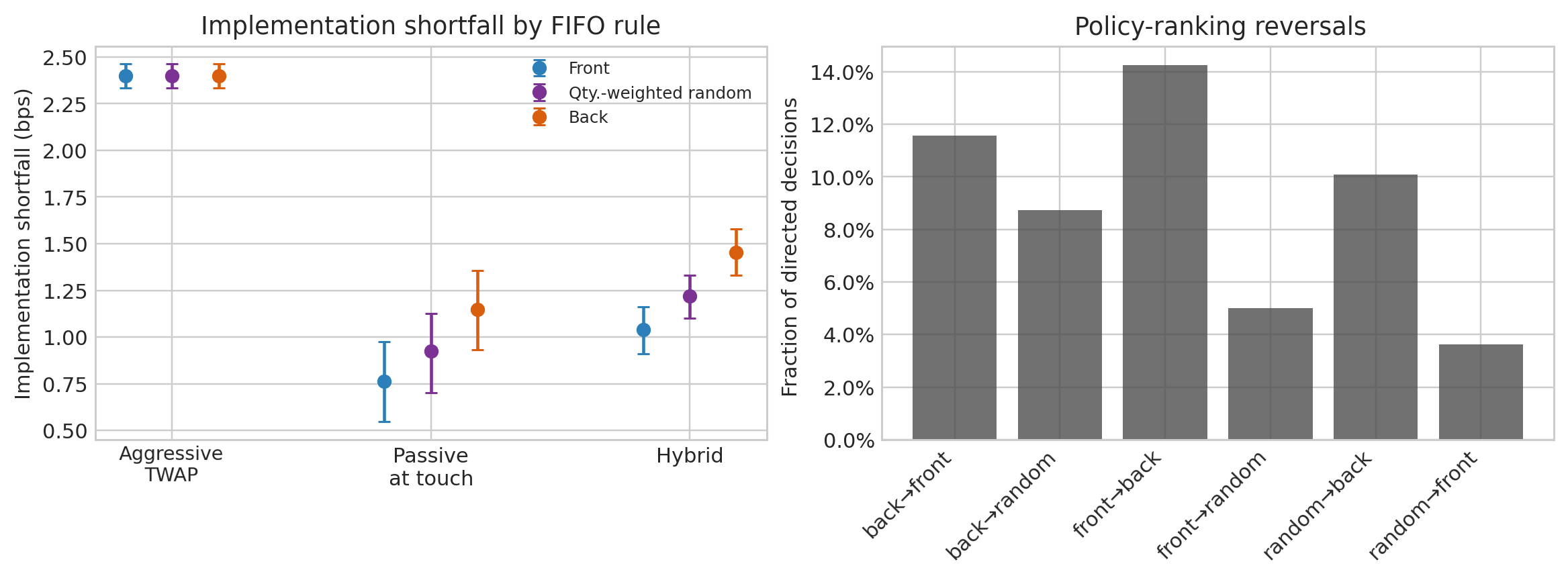}
  \caption{Held-out execution sensitivity across instruments. Top: 1301.T with a 100-share parent. Bottom: 7911.T with a 200-share parent. Left panels report mean implementation shortfall under front, quantity-weighted-random, and back cancellation allocation; error bars denote 95\% equal-date resampling intervals. Right panels report the fraction of directed decisions in which the lowest-cost policy selected under one queue rule is no longer lowest-cost under another.}
  \Description{For both instruments, the aggressive benchmark is unchanged across cancellation rules. Passive and hybrid implementation shortfall are lowest under front cancellation and highest under back cancellation. Policy-choice changes are more frequent for 7911.T than for 1301.T.}
  \label{fig:policy-sensitivity}
\end{figure*}

\subsection{Policy consequences}

Table~\ref{tab:fifo-results} and Figure~\ref{fig:policy-sensitivity} summarize the held-out policy outcomes. For 1301.T, the aggressive benchmark is invariant at 10.010 bps. Front rather than back removal raises passive preterminal completion by 8.01 percentage points and gives a front-minus-back shortfall difference of \(-1.010\) bps (95\% interval \([-1.253,-0.788]\)); the hybrid difference is \(-0.398\) bps. For 7911.T, the aggressive benchmark is invariant at 2.400 bps. Front rather than back removal raises passive preterminal completion by 7.39 percentage points and gives a front-minus-back shortfall difference of \(-0.384\) bps (\([-0.493,-0.284]\)); the hybrid difference is \(-0.414\) bps.

Passive-at-touch has the lowest date-average cost for both instruments under every cancellation rule, followed by hybrid and the aggressive benchmark. For 1301.T, the lowest-cost policy changes in 0.83\%--4.49\% of directed cross-rule comparisons. The corresponding range for 7911.T is 3.63\%--14.26\%. Conditional on a change, the added cost is 14.277--17.354 bps for 1301.T but 2.366--3.723 bps for 7911.T. Thus, policy choices change more frequently for 7911.T but at a smaller conditional cost. Because observations share calendar dates and background paths, these frequencies and costs describe the evaluated matched samples; they are not estimates of an unconditional market-wide frequency.

Deterministic 100-share partitions leave the primary front-minus-back estimates unchanged for both instruments. The stricter 1301.T replay-block analysis produces a front-minus-back passive-shortfall difference of \(-0.397\) bps, with interval \([-0.576,-0.229]\). The corresponding 7911.T strict-block comparison gives \(-0.325\) bps, with interval \([-0.649,0.000]\). Its direction is consistent with the primary result, but the small strict-block sample is not independently conclusive. Moreover, the boundary-fixed and strict analyses retain different episode samples. Their difference therefore shows sensitivity to episode eligibility; it does not isolate the effect of boundary handling on an identical sample.

\section{Discussion}

\subsection{What a qualified fill enables}

The sensitivity envelope does not recover the historical fill that would have occurred under the unobserved order-level book. Instead, it separates conclusions supported by the observed aggregate path from conclusions that depend on a queue-reconstruction assumption. A policy preference is robust within the declared compiler when its loss remains lower across all matched front, random, and back realizations. When the selected policy changes, the L2 record alone does not identify a unique decision, even though every realization reproduces the same displayed book and market removals.

This distinction supports a policy-acceptance test. A practitioner can require a candidate policy to remain preferred throughout the realization set, evaluate it using worst-case loss or minimax regret \cite{manski2007decision}, or impose a materiality threshold below which apparent improvements are treated as economically indistinguishable. If losses are evaluated over  the full rule-and-seed compiler class, one possible ambiguity-averse decision is

\begin{equation}
  \pi_{\mathrm{rob}}
  \in
  \arg\min_{\pi\in\Pi}
  \sup_{R\in\mathcal R_{\mathfrak C}(A_b)}
  \ell(\pi;R).
  \label{eq:robust-policy}
\end{equation}

This is a prospective full-class criterion. Section~\ref{sec:results} reports matched fixed-\(\omega\) rule contrasts and does not estimate the supremum in Equation~\eqref{eq:robust-policy}. Within the declared compiler class, a wide envelope can motivate a less passive policy or justify the cost of obtaining MBO data; a narrow envelope shows only that the decision is insensitive to the cancellation rules considered here. 

For learned execution policies, the same procedure could serve as a validation or robust-training layer. Training or testing across matched FIFO realizations could reduce the risk that an agent appears successful because it has exploited one simulator-specific fill convention. The resulting claim is deliberately conditional: robustness holds with respect to the declared compiler class. The purpose of the three policies in this study is to test whether a fixed policy comparison survives uncertainty about the hidden FIFO history.

\subsection{Interpretation of the endpoints}

Front and back are endpoint conventions for cancellation location, not estimates of the historical queue. They produce ordered endpoints for the single-touch tagged-fill functional under Proposition~\ref{prop:cond-tagged-fill-ordering}, but are not guaranteed extrema of general policy outcomes within \(\mathcal R_{\mathfrak C}(A_b)\), nor sharp bounds over \(\mathcal R(A_b)\). The quantity-weighted-random rule is also a modeling choice. Because paths, partitions, additions, L1-attributed removals, and block boundaries are paired, a matched comparison isolates the effect of the stated within-queue cancellation convention, conditional on the common partition and boundary rules.

The empirical purpose is to show, on a held-out path with broad within-instrument coverage, that observationally equivalent aggregate histories can support different costs and policy choices. Unlike a comparison of separately simulated trajectories, each comparison here holds aggregate states, reconciled market removals, partitions, additions, episode boundaries, and terminal execution rules fixed. The measured range therefore reflects reconstruction sensitivity conditional on one path, rather than a mixture of queue uncertainty and market-path forecast error.

The cross-instrument replication shows that the direction of FIFO sensitivity is not specific to the primary security. Front removal is most favorable to passive execution, random removal lies between the endpoints, and back removal is least favorable for both instruments. The economic magnitude differs, however. Passive completion is low for 1301.T, so the cancellation rule changes how much inventory must be crossed at the horizon and produces a relatively wide cost range. Most 7911.T passive inventory fills under every rule, so a similar completion difference translates into a narrower cost range. This comparison is descriptive rather than causal: parent quantities differ, as do tick size, activity, depth, and other instrument characteristics.

\subsection{Future extensions}

Further work should extend the design to a broader cross-section of securities, additional held-out months, parent sizes, and execution horizons. A further extension is scaling up the experiments. Since the reconstruction and replay interface are expressed in terms of order-level message streams and external matching-engine validation, the same framework can be applied to larger collections of synthetic or historical aggregate paths without changing the identification logic. JAX-based replay makes it natural to evaluate many matched FIFO realizations in parallel and to estimate sensitivity envelopes, worst-case loss, and cross-realization regret over broader scenario sets than in the present two-instrument study. Another extension is robust learning. Instead of using matched FIFO realizations only for post-hoc evaluation, one could train or test execution agents across multiple compatible queue realizations of the same aggregate path, so that good performance does not depend on a single simulator-specific fill convention.

\section{Limitations}

The study covers two RICs and seven months, but held-out evidence comes from one month and the same 18 calendar dates for each instrument, giving 36 instrument-day observations rather than a broad market cross-section. L1/L2 reconciliation attributes market removals without observing resting order IDs, while the latent partition and three cancellation rules remain modeling choices; front and back are conditional endpoints. Parent quantities differ across instruments---100 shares for 1301.T and 200 shares for 7911.T---so differences in absolute cost or completion cannot be interpreted as the causal effect of liquidity. Shadow policies also assume zero latency, fees, rebates, endogenous market impact, and terminal inventory crossed on a copied book. Date-resampling intervals describe variation across held-out dates, whereas policy-choice-change and regret rates are pooled descriptive statistics.

\section{Conclusion}

An aggregate book path does not uniquely determine the FIFO market faced by a passive order. Using synchronized TYO L1 and L2 data, we hold displayed states and reconciled market removals fixed while varying the unobserved placement of within-queue cancellations. For RIC 1301.T, moving from back to front removal raises passive completion by 8.01 percentage points and lowers implementation shortfall by 1.010 bps. A replication on RIC 7911.T preserves the same ordering, with a 7.39-percentage-point completion difference and a 0.384-bps cost difference. The aggressive benchmark remains invariant for both instruments. Policy choices change less frequently but at greater cost for RIC 1301.T, and more frequently but at lower cost for RIC 7911.T. These findings support the paper's claim across two distinct securities: compatible order-level histories can reproduce the same observed aggregate path while implying different passive execution opportunities and, in some episodes, different policy choices. Within the declared compiler class, persistence across matched reconstructions indicates limited dependence on the cancellation rule, whereas a change in policy selection reveals sensitivity to that rule. 

\section{Acknowledgments}
The authors thank London Stock Exchange Group (LSEG) for facilitating access to the proprietary LSEG Tick History for TYO Tokyo Exchange venue historical data used in this study and for approving its use for this research.

This work was completed while Riya Danait was an intern at NVIDIA. Riya Danait's research is supported by \grantsponsor{qrt}{Qube Research and Technologies (QRT)}{qube-rt.com} through the \grantsponsor{epsrc}{Engineering and Physical Sciences Research Council (EPSRC)}{https://www.ukri.org/councils/epsrc/} Centre for Doctoral Training in Mathematics of Random Systems: Analysis, Modelling and Simulation (EPSRC Grant EP/S023925/1). 






\section*{Ethics and Privacy Statement}

We use proprietary TYO L1/L2 market data; access, processing, and reporting follow the data provider's restrictions, and only aggregate results are reported. Price-taking simulation results could be mistaken for guaranteed live-trading performance and used without accounting for market impact or queue model uncertainty. The proposed sensitivity analysis is intended to reduce that risk by identifying execution policies whose apparent advantages depend on unobservable FIFO assumptions before those policies are deployed.

\bibliographystyle{ACM-Reference-Format}
\bibliography{references}

\end{document}